\newtheorem{theorem}{$\mathbf{Theorem}$}
\newtheorem{lemma}[theorem]{$\mathbf{Lemma}$}
\begin{document}
\title{An Evolutionary Game for User Access Mode Selection in Fog Radio Access Networks}

\author{Shi Yan, Mugen Peng,~\IEEEmembership{Senior~Member,~IEEE,} Munzali Ahmed Abana, and Wenbo~Wang\\
{Key Laboratory of Universal Wireless Communication, Ministry of Education\\
Beijing University of Posts and Telecommunications, Beijing, 100876, China}}\maketitle

\begin{abstract}

The fog radio access network (F-RAN) is a promising
paradigm for the fifth generation wireless communication
systems to provide high spectral efficiency and energy efficiency.
Characterizing users to select an appropriate communication mode among fog access point (F-AP), and device-to-device (D2D) in F-RANs is critical for
performance optimization.
Using evolutionary game theory, we investigate the
dynamics of user access mode selection in F-RANs. Specifically, the competition among
groups of potential users space
is formulated as a dynamic evolutionary game, and the evolutionary
equilibrium is the solution to this game.
Stochastic geometry tool is used to derive the proposals' payoff expressions for both F-AP and D2D users by taking into account the different nodes locations, cache sizes as well as the delay cost.
The analytical results obtained from
the game model are evaluated via simulations, which show that the evolutionary game based access mode selection algorithm can reach a much higher payoff than the max rate based algorithm.

\end{abstract}

\section{INTRODUCTION}
To achieve the goals of the fifth generation (5G) mobile wireless systems \cite{C1} and alleviate the existing challenges in cloud radio access networks (C-RANs)\cite{C-RAN}$\sim$\cite{H-CRAN}, the fog radio access network (F-RAN) has been proposed as a new network architecture by incorporating of fog computing, edge storage
and centralized cloud computing into radio access networks \cite{F-RAN}\cite{F-RANzkc}.
Fog computing, which is similar to edge computing, is first proposed by Cisco \cite{Fog}. It extends cloud computing based services to the edge of the network. In F-RANs, services cannot only be executed in a centralized unit such as the base band unit (BBU) pool in C-RAN, but also can be hosted at smart terminal devices which are closer to the users. Meanwhile, through the user-centric adaptive techniques such as device-to-device (D2D), distributed coordination, and large-scale centralized
cooperation, users do not have to connect to the centralized cloud computing unit to complete the data transmission, which will improve both spectrum and energy efficiencies as well as relieve the load of fronthaul and alleviate the burden of BBU pool.
In order to execute the above, the traditional access point (AP) is evolved
to the fog access point (F-AP) through equipped with a certain caching and sufficient computing capabilities to execute the local cooperative signal processing in the physical layer.

Extensive studies have been done to study the user access mode schemes in C-RANs and heterogeneous cloud radio access networks (H-CRANs).
The performance of uplink transmissions under different user access modes in C-RANs have been investigated in \cite{b8}, and the impact of the remote radio heads (RRHs) intensity, the number of antennas and the pathloss exponent on the performance is characterized.
By using the stochastic geometry tool, the authors in \cite{b9} have provided
the probabilistic characterization of the signal interference ratio (SIR) distribution and the ergordic rate at a randomly located user in a downlink
H-CRAN system.
In \cite{b10}, the successful access probability has been
derived for C-RAN RRH cluster mode, and based on the analytical
result, two coalition game based cluster algorithms have
been designed. And in our previous study \cite{f}, the ergodic rates of two tier F-RANs under different access modes are characterized, and a mode selection mechanism is presented.

However, most of literatures focused on the mode selection schemes based on the user received SIR and distance, whereas little attention has been paid to study the impact of cache and fronthaul delay. The fronthaul limitations between RRHs and cloud are known to impose a formidable bottleneck to the system performances and a remarkable challenge to block the commercial practices \cite{Fronthaul}. Consequently, taking advantage of the fog-computing to switch the content cache or data process to network edge devices is the key to increase both of the spectral and energy efficiency, as well as relieve the burden of fronthaul in cloud computing based network architectures. D2D communications as one of the network edge communication technologies which has the ability of support content delivery and sharing is more and more paid extensive attention. Existing research on D2D networks mode selection mainly focused on underlaid cellular networks with fixed location model \cite{b12} $\sim$ \cite{b14}.
Different from the traditional underlaid cellular networks, the main challenge to design a user access mode selection algorithm in F-RANs is that the F-APs and D2D users are often deployed randomly and the network performance of F-RANs may drastically deteriorate with the increasing number of users who select the access mode with more fronthaul load.

In this paper, we present an adaptive user access mode selection algorithm for downlink F-RAN by taking into account the different nodes locations, cache sizes as well as the fronthaul delay cost.
The main contributions of this paper are three-folds.

\begin{itemize}
\item We characterize the ergodic rate and the coverage probability
of both D2D and
 F-AP modes in F-RANs. We investigate the impacts of the user node density, and the quality of
service (QoS) constrains and the intra-tier interference. The closed-form expressions are presented in some special cases,
which can make the analysis not only tractable, but also flexible.

\item Based on the above results obtained from the stochastic
geometry analysis, an evolutionary game theoretic approach
is presented to solve the problem of user access
mode selection, while taking into account different access
service requirements and fronthaul delay cost.

\item A distributed algorithm is proposed to reach the evolutionary
equilibrium, and its performance is compared with
that of the max rate based user access mode selection
scheme. The accuracy of the stochastic geometry analytical results are validated by Monte Carlo simulation results.
\end{itemize}

The remainder of this paper is organized as follows. Section II
describes the system model, cache model, delay model as well as the user access modes
in a downlink F-RAN. In Section III, the dynamic
behaviors of potential F-AP users are modeled as an evolutionary game. And we present the results of the coverage probability and ergodic rate in F-RAN system for both D2D and F-AP modes. While in Section IV, we propose
a distributed protocol to converge the evolutionary equilibrium, and
the evolutionary stable strategy (ESS) is analyzed. Simulation results provided in Section V prove that the performance comparisons of different
user access modes with different system parameters and the robustness of our protocol.
Finally, Section VI concludes the paper.

\section{SYSTEM MODEL}
\subsection{Radio Access Network Model}
As shown in Fig.1, a F-RAN downlink system is considered in this paper, where a group of F-APs are deployed according to a two-dimensional PPP $\Phi_f$
with density of $\lambda_f$ in a disc plane ${\cal{D}}^2$. Each F-AP is assumed as single antenna configuration with a fixed transmit power $P_f$, and interfaced to the cloud computing layer through a wired fronthaul network which is composed of links connecting F-APs with
several gateways. Then, the signals over gateways will be large-scale processed in the cloud network server via dedicated fiber \cite{Backhaul}. The locations of gateways are distributed according to a homogeneous PPP $\Phi_g$ with intensity $\lambda_g$.

The spatial distribution of users is
modeled as an independent PPP $\Phi_u$ distribution with constant intensity
$\lambda_u$. We set $p\sim(0,1]$ as the probability that a user can
direct transmit its local cache content to its nearby content require user (i.e.support D2D mode). Then, the location distribution of the D2D transmit users can be denoted as a
thinning homogeneous $\Phi_{tu}$ with the density of $\lambda_{tu}=p\lambda_u$.
Meanwhile according to Marking Theorem, the distribution of content require users follows a
stationary PPP $\Phi_{ru}$ with the density of $\lambda_{ru}=(1-p)\lambda_u$.

\begin{figure}[!htp]
\centering
\includegraphics[width=3.4in]{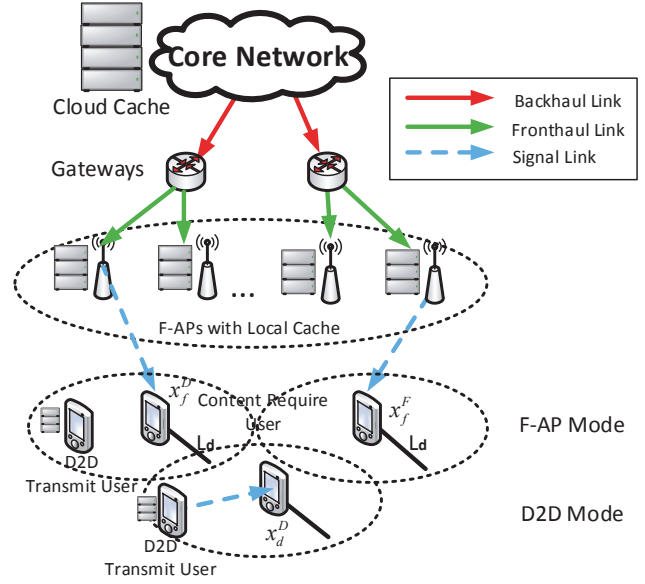}
\caption{Downlink transmission scenario}
\label{f1}\vspace*{-1em}
\end{figure}

In this paper, we focus on the user-centric access mode selection for content require users via the above F-RAN system within two phases, namely cache pre-fetching and access mode selection. The total number of content require users can be estimated as $N_{ru}= { {\cal A}({\cal{D}}^2)(1-p)\lambda_u}$, where ${\cal A}({\cal{D}}^2)$ is the area of the disc plane ${\cal{D}}^2$.

\subsection{Cache Pre-Fetching Phase}

In cache pre-fetching phase, the F-APs will cache content from the central cloud server file library which stored $N$ contents with uncoded cache strategies that may be requested by users $\mathcal {V}=\{v_1, v_2, ..., v_N\}$. Then, the D2D transmit users will cache part of the content from F-APs via wireless link. The pre-fetching content is limited by the local caching storage space size of F-APs and D2D transmit users, which denote by $C_f$ and $C_d$ ($C_d<C_f<N$), respectively, and the pre-fetching content in $C_f$ and $C_d$ is assumed to be constant across access mode selection phase.

Prior studies have been found that users are mostly interested in downloading
the most popular video contents \cite{Zipf}, which leads to only a small
portion of the $N$ contents are frequently accessed by the majority
of users. Therefore, in this paper we assume the F-APs and D2D transmit users only cache the most popularity content, and the demand probability can be modeled as the following Zipf distribution

\begin{equation}\label{popu}
{f_i(\sigma, N)} = \frac{{1/{i^\sigma }}}{{\sum\nolimits_{k = 1}^N {1/{k^\sigma }} }},
\end{equation}
where $\sum_{i=1}^N {f_i(\sigma, N)}=1$, and the video content with a smaller index has a larger probability of being requested by users, i.e. $f_i(\sigma, N)>f_j(\sigma, N)$, if $i<j$.
 Zipf exponent $\sigma>0$ controls
the relative popularity of files, and with the larger $\sigma$ the caching storage has a fewer of popular video contents accounting for the majority of the requests.

Next, we define the content caching probability as the probability of an event that a user $U$ can find the requested file $V$ in its accessed node, i.e., $p_c^x={\Pr }(V \in C_x)$. By setting the caching storages in F-APs and D2D transmit users to only cache the most popularly requested video contents, the content caching probabilities of each F-AP and D2D transmit user can be respectively obtained as

\begin{equation}\label{ccpf}
p^{F}_c={\Pr }(V \in C_f)={\sum\nolimits_{i=1}^{C_f}}{f_i(\sigma, N)},
\end{equation}
\begin{equation}\label{ccpm}
p^{D}_c={\Pr }(V \in C_d)={\sum\nolimits_{i=1}^{C_d}}{f_i(\sigma, N)}.
\end{equation}

\subsection{Access Mode Selection Phase}

In access mode selection phase, a content require user can select a communication mode independently according to the perceived performance, distance, local cache and the video streaming traffic delay cost. Let $U \to X$ signify that a desired content require user $U$ is associated to a node located at $X$, and ${\left\| X\right\|}$ denotes the distance between $U$ and $X$.
And we can categorize the content require users into two types: potential D2D mode users and potential F-AP mode users, the definitions of which are given as follows:

\begin{itemize}
\item \textbf{Potential D2D mode Users:}  A content require user is a potential D2D mode user if it can successfully obtain the requested contents from another D2D transmit user in a known location within a distance threshold $L_d$.
\end{itemize}

\begin{itemize}
\item \textbf{Potential F-AP mode Users:} A content require user is a potential F-AP mode user if the user cannot find a D2D transmit user within $L_d$, or the content require user can find a D2D transmit user near it but the requested content $V$ is not cached in that D2D transmit user. Thus, $U$ will try to access its nearest F-AP node.
\end{itemize}

It's worth noting that all potential D2D mode users can select both the D2D mode and F-AP mode, while each of the potential F-AP users only can select the F-AP mode to access its nearest F-AP.
More specifically, if a potential D2D mode user selects D2D mode, a communication link is established between the tagged content require user and its service D2D transmit user which has a known location. On the other hand, when F-AP mode is selected and the requested content $V$ is cached in the F-AP node, the potential D2D mode user or potential F-AP user can download data directly from F-AP. If user cannot find the requested content in its nearby F-AP, then user will download its data through the fronthaul in the same way as traditional C-RANs.

Let $N^{(D)}=N_{ru} p_d$ and $N^{(F)}=N_{ru}(1-p_d)$ denote the number of potential D2D mode and F-AP mode users, respectively, where $p_d$ denotes the probability of at least one D2D transmit user located in distance threshold $L_d$ meanwhile has the requested content $V$, which can be given as
\begin{equation}\label{D2Dp}
p_d=1-\exp({-\pi \lambda_{tu} p_c^D L_d^2}).
\end{equation}

\begin{proof}
By using the property of 2-D Poisson process, the probability distribution of the nodes number $m$ in a circle area $\pi l^2$ with radius limit $l$ can be derived as

\begin{equation}\label{PM}
{\rm{P_r}}\left\{ {\Phi \left( {\pi {l^2}} \right) = {m}} \right\}
 = \frac{{{{\left( {{\lambda_X}\pi {l^2}} \right)}^{{m}}}{e^{ - {\lambda_X}\pi {l^2}}}}}{{\left( {{m}} \right)!}}.
\end{equation}

Let $l=L_d$, $\lambda_X=p_c^D \lambda_{tu}$ and $m=0$. Then, we have the probability of none D2D transmit user has the the requested video content $V$ within the distance limit $L_d$. Therefore, \eqref{D2Dp} can be given as the probability of complementary events.
\end{proof}

During the access mode selection phase, we consider two kinds of delay, cache processing delay and average F-AP fronthaul packet delay. The cache processing delay mainly comes from the cache data processing time, and in this work we assume that the processing delay in D2D transmit users and F-APs as constants $\delta_d$ and $\delta_f$, respectively.

On the other hand, the F-AP fronthaul packet delay is defined as the time needed for a
packet to be transmitted from fronthaul to a F-AP. It depends on the size of the packet and the number of F-APs associated with the gateway \cite{frontahul}. In wired scenario, Gamma distribution is usually used to model the distribution of the delay in routers or switches \cite{frontahul1}\cite{frontahul2}. Then, the fronthual packet delay is expressed as

\begin{equation}\label{delayf}
T_f \sim \rm{Gamma}(\frac{\lambda_f}{\lambda_g}k, a+b\mu ),
\end{equation}
where the first and second terms of Gamma function represent the effect of number
of connecting nodes and packet size on delay, respectively. $a$,
$\mu$ and $k$ are constants, representing the processing capability
of the nodes, and $b$ is the packet size.

Therefore, we can directly obtain the mean F-AP fronthaul packet delay from the expectation of Gamma distribution as

\begin{equation}\label{meandelayf}
\overline{T_f}= \frac{\lambda_f}{\lambda_g}k(a+b\mu).
\end{equation}

And the total delay $\phi_i$ in access mode $i$ can be expressed as follows
\begin{equation}\label{totaldelay}
 \left\{ {\begin{array}{*{20}{l}}
  {\phi_d=p_c^D \delta_d},\\
  {\phi_f=p_c^F \delta_f+ (1-p_c^F)\overline{T_f}}.\\
\end{array}} \right.
\end{equation}

\section{EVOLUTIONARY GAME APPROACH for USER ACCESS MODE SELECTION}

In this section, the evolutionary game formulation
for user access mode selection is given, and the replicator dynamics is used to model the strategy adaptation process.

\subsection{Formulation of the Evolutionary Game}
We define the evolutionary game as follows:

\begin{itemize}
\item \textbf{Players:} Each user who can choose among multiple access mode is a player of the game. In this paper, the players are all of the potential D2D mode users. Note that the potential F-AP mode users are not involved in the game since F-AP mode is the only access mode available to these users.

\item \textbf{Strategies:} The set of strategies is the selection of a communication mode. The set of strategies for the potential D2D mode players is $\mathcal {D}=\{D, F\}$, and there is obvious only one strategy possible for the potential F-AP players $\mathcal {F}=\{F\}$.

\item \textbf{Payoff:} The payoff quantifies the performance satisfaction
    level of a potential player which depends on the ergodic rate
    rate as well as the delay cost. The payoff function for a $(n)$ mode potential player selected access mode $i$ is defined as follows:
\end{itemize}

\begin{equation}\label{payoff}
\begin{gathered}
\pi_i^{(n)}(\textbf{x})=\frac{p_iB_i}{\Sigma_{a\in A^{(n)}}n_i^{(a)}}C_i^{(n)}-q_i\phi_i,\hfill\\
\quad\quad\quad=\frac{p_iB_i}{\Sigma_{a\in A^{(n)}}N^{(a)}x_i^{(a)}}C_i^{(n)}-q_i\phi_i,\hfill\\
\end{gathered}
\end{equation}
where \textbf{x} denotes the vector of the proportion of users choosing different access modes. $C_i^{(n)}$ denotes the utility function measuring the achieved performance in $(n)$ mode choosing mode $i$, which will be further discussed in Section IV. $B_i$ denotes the bandwidth in mode $i$. $x_i^{(a)} = n_i^{(a)}/N^{(a)}$ is the proportion of individuals of potential users in $(a)$ choosing strategy $i$. $n_i^{(a)}$ is the number of potential users in $(a)$ mode choosing mode $i$, $A^{(n)}$ is the set of subareas
in mode $n$, for the potential D2D mode users, this
set can be defined as $A^{(n)} = \{D, F\}$ since both D2D and F-AP modes are available for these users. $p_i$ is the coefficient of linear pricing function used by mode $i$ to charge a user.

Moreover, since the F-AP fronthaul bandwidth is limited, it will lead to negative impact on their payoff if too many users select F-AP mode to receive their data through fronthaul link. On the other hand, the processing delay cost of both of D2D and F-AP modes will also increase with the growth of the users at the same time. In order to evaluate these impacts, we define the cost price coefficient $q_d$ and $q_f$ to reflect the influence of the proportion of mode selection.

\begin{equation}\label{qi}
 \left\{ {\begin{array}{*{20}{l}}
   {q_{d}=c_1\left[\exp(c_2N^{(D)}x^{(D)}_d)-1\right],}\\
   {q_{f}=c_3\left[\exp\left(c_4\left(N^{(F)}+N^{(D)}x^{(D)}_f\right)\right)-1\right],}\\
\end{array}} \right.
\end{equation}
where $c_1, c_2, c_3$ and $c_4$ are positive constants.

In order to measure the users' performance utility functions presented in the evolutionary game formulation \eqref{payoff}, in the next subsection, we use stochastic geometry tool to derive the coverage probability and ergodic rate for both D2D and F-AP modes by taking into account the different nodes locations, the SIR QoS constrains and the interference.

\subsection{Performance Analysis for Different User Access Modes}

In this subsection, we derive the coverage probability and ergodic rate for F-RAN with two different user access modes. The ergodic rate in F-RAN is defined as $R_i=\mathbb{E}[\rm{ln}\left(1+SIR(U \to X_i)\right)|SIR(U $$ \to X_i)>T_i]$, where the unit of the ergodic rate is in terms of nats/s/Hz, $T_i$ is the QoS constraint of access mode $i$.

In this paper, we limit our attention to the interference-limited scenario since the interference is much larger than the noise.
Path loss is represented by a standard power
law path loss function ${\left\| X \right\|^{ - \alpha }}$, where $\alpha>2$ is the path loss exponent.

Then, if $U$ is served by a D2D transmit user at a fixed distance of $\left\| {{X_f}} \right\|$ to $U$, the received SIR at the desired user is given by

\begin{equation}\label{SIRd}
SIR(U \to {X_d}){\rm{ }} = \gamma_d= \frac{{{P_d}{h_d}{{\left\| {{X_d}} \right\|}^{ - \alpha}}}}{{{I_{d,ru}} + {I_{f,ru}}}},
\end{equation}
where $h_d$ characterizes the flat Rayleigh channel fading between the potential D2D mode user and its associated D2D transmit user, and ${{\left\| {{X_d}} \right\|}^{ - \alpha }}$ denotes the path loss.
${I_{d,ru}} = {\sum _{i \in {\Phi _{tu}}}}{P_d}{g_i}r_i^{ - \alpha }$ is the intra-tier interference from other D2D transmit users, $ g_i \sim \exp (1)$ and $r_i^{ - \alpha }$ denote the exponentially distributed fading power over the Rayleigh fading channel and path loss from other D2D transmit users to $U$, respectively. ${I_{d,ru}} = {\sum _{j \in {\Phi _f}}}{P_f}{g_j}l_j^{ - \alpha }$ denotes inter-tier interference from F-APs, the definition of ${g_j}$ and $l_j^{ - \alpha }$ are similar to that in ${I_{d,ru}}$.

Next, if $U$ is served by a single nearest F-AP in F-AP mode, the SIR is given by

\begin{equation}\label{SIRf}
SIR(U \to {X_f}){\rm{ }} = \gamma_f= \frac{{{P_f}{h_f}{{\left\| {{X_f}} \right\|}^{ - \alpha }}}}{{{I_{d,fu}} + {I_{f,fu}}}},
\end{equation}
where ${I_{f,fu}} = {\sum _{{j} \in {\Phi _f}/f}}{P_f}{g_j'}l_{j}^{' - \alpha }$, and ${I_{d,fu}} = {\sum _{{j'} \in {\Phi _{tu}}}}{P_d}{g_{j'}}r_{j'}^{ - \alpha }$ denote the intra-tier interference from other F-APs and inter-tier interference from D2D transmit users, respectively, and the definitions of $h_f$, $g_i'$ and $l_i'$ are given in \eqref{SIRd}.

\textbf{1.D2D mode}

The coverage probability of a user in D2D mode can be denoted as

\begin{equation}\label{PD}
\begin{gathered}
P_d(T_d, \alpha, \left\| {{X_d}} \right\|) = \Pr \left( {\frac{{{P_d}{h_d}{{\left\| {{X_d}} \right\|}^{ - {\alpha}}}}}{{{I_{tu,ru}} + {I_{f,du}}}} \ge {T_d}} \right)\hfill\\
\quad\quad\quad=\exp \left( { - \pi {{\left\| {{X_d}} \right\|}^2}\left( {{p_c^{D}\lambda _{tu}} + {{\left( {\frac{P_f}{P_d}} \right)}^{\frac{2}{{{\alpha}}}}}{\lambda _{f}}} \right)C\left( {{\alpha}} \right)T_d^{\frac{2}{{{\alpha}}}}} \right),\hfill\\
\end{gathered}
\end{equation}
where $C\left( \alpha  \right) = {{2{\pi}\csc \left( {{{2\pi }}/{\alpha }} \right)}}/{\alpha }$.
\begin{proof}
See Appendix A.
\end{proof}

Similarly, the achievable ergodic rate can be derived as
 \begin{equation}\label{RD}
\begin{gathered}
 {R_d} = \mathbb{E} \left[ {\ln \left( {1 + {\gamma _d}} \right)} | \gamma_d \ge T_d\right] \hfill\\
 \mathop  \approx \limits^{\left( a \right)}  \ln(T_d)P_d(T_d,\alpha \left\| {{X_d}} \right\|) - \frac{{{\alpha}}}{2}\hfill\\
\cdot{\rm{Ei}}\left[ { -T_d^{\frac{2}{\alpha}} \pi {{\left\| {{X_d}} \right\|}^{2}}\left( {{p_c^{D}\lambda _{tu}} + {{\left( {\frac{P_f}{P_d}} \right)}^2}{\lambda _f}} \right)C\left( {{\alpha}} \right)} \right], \hfill\\
\end{gathered}
 \end{equation}
where $(a)$ follows in the high SIR conditions ${\ln}\left( {1 + \gamma_d} \right) \to {\ln }\left( \gamma_d \right)$, Ei$[s]=-\int_{-s}^\infty{{e^{-t}}/{t}\rm{d}t}$ is the exponential integral function.
\begin{proof}
See Appendix B.
\end{proof}

\textbf{2.F-AP mode}

In F-AP mode, the desired user will try to access its nearest F-AP $X_f$ to download the requested content $V$. The probability density function (PDF) of the distance between $X_f$ and $U$ can be derived by using a similar way as \eqref{D2Dp}
 \begin{equation}\label{cdfX}
\begin{gathered}
 {f_{\left\| {{X_f}} \right\|}}\left( {{r_f}} \right) = \frac{{\partial \left( {1 - \Pr \left( {{\rm{No\;F-AP\;closer\;than\; }}{r_m}} \right)} \right)}}{{\partial {r_f}}} \hfill\\
  = \frac{{\partial \left( {1 - \exp \left( { - \pi {\lambda _f}r_f^2} \right)} \right)}}{{\partial {r_f}}} = 2\pi {\lambda _f}{r_f}{e^{ - \pi {\lambda _f}r_f^2}}. \hfill\\
 \end{gathered}\
 \end{equation}

 Thus, the coverage probability of F-AP mode can be calculated as
  \begin{equation}\label{PF}
  {P_f}\left( {{T_f},\alpha} \right) =\frac{1}{{1 + \rho \left( {{T_f},{\alpha}} \right) + \frac{{{\lambda _{tu}}}}{{{\lambda _f}}}C\left( {{\alpha}} \right){{\left( {\frac{{{P_d}{T_f}}}{{{P_f}}}} \right)}^{2/{\alpha}}}}},
  \end{equation}
 where $\rho \left( {T_f,\alpha } \right) = \int_{{T_f^{ - \frac{2}{\alpha}}}}^\infty  {\frac{{{T_f^{2/\alpha }}}}{{1 + {v^{\alpha /2}}}}{\rm{d}}v}$.
\begin{proof}
See Appendix C.
\end{proof}

And the ergodic rate for F-AP mode can be given as
 \begin{equation}\label{R1F}
  \begin{gathered}
   {R_f} =  \mathbb{E} \left[ {\ln \left( {1 + {\gamma _f}} \right)} | \gamma_f \ge T_f\right] \hfill\\
\approx  \int_{\ln(T_f)}^{\infty}   {P_f(e^\theta, \alpha )} {\rm{d}}\theta+ \ln(T_f)P_f(T_f, \alpha ). \hfill\\
  \end{gathered}
 \end{equation}

\textbf{Special Case}: {Path loss exponent for F-AP to user link is 4 ($\alpha=4$), and SIR threshold $T_f > 1$}

An approximated closed-form approximate expression can be derived in this special case, and we give the ergodic rate in the following lemma 1.

\begin{lemma}
The ergodic rate for nearest F-AP mode with $\alpha$= 4 and $T_f > 1$ can be expressed as
\begin{equation}\label{R1FS}
  \begin{gathered}
   {R_f}^{\alpha=4} =  \mathbb{E} \left[ {\ln \left( {1 + {\gamma _f}} \right)} | \gamma_f \ge T_f\right] \hfill\\
\approx  \frac{4 }{\pi\sqrt{T_f}\left(1+\frac{\lambda_{tu}}{\lambda_f}\sqrt{\frac{P_d}{P_f}}\right)}
+\frac{2 \ln(T_f)}{\pi\sqrt{T_f}\left(1+\frac{\lambda_{tu}}{ \lambda_f}\sqrt{\frac{P_d}{P_f}}\right)} \hfill\\
=\frac{2 (2+\ln(T_f))}{\pi\sqrt{T_f}\left(1+\frac{\lambda_{tu}}{ \lambda_f}\sqrt{\frac{P_d}{P_f}}\right)}.\hfill\\
  \end{gathered}
 \end{equation}
\end{lemma}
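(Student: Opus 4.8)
The plan is to substitute $\alpha=4$ into the general coverage-probability expression \eqref{PF} and the ergodic-rate integral \eqref{R1F}, reduce everything to elementary functions, and then invoke the hypothesis $T_f>1$ to close the form. First I would record the two specializations that make $\alpha=4$ tractable: $C(4)=2\pi\csc(\pi/2)/4=\pi/2$, and the exponent $2/\alpha=1/2$, so that the inter-tier interference term in the denominator of \eqref{PF} collapses to $\frac{\lambda_{tu}}{\lambda_f}\frac{\pi}{2}\sqrt{P_d/P_f}\,\sqrt{T_f}$.

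Second, I would evaluate $\rho(T_f,4)$ explicitly. Since $\alpha/2=2$, the integrand in the definition of $\rho$ is $\sqrt{T_f}/(1+v^2)$, whose antiderivative is $\sqrt{T_f}\arctan v$; evaluating from $T_f^{-1/2}$ to $\infty$ gives $\rho(T_f,4)=\sqrt{T_f}\bigl(\tfrac{\pi}{2}-\arctan T_f^{-1/2}\bigr)=\sqrt{T_f}\arctan\sqrt{T_f}$, where I used the identity $\tfrac{\pi}{2}-\arctan(1/x)=\arctan x$ for $x>0$.

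Third -- and this is the only step I expect to require care, because it is where the assumption $T_f>1$ actually enters -- I would approximate $\tfrac{\pi}{2}-\arctan T_f^{-1/2}\approx\tfrac{\pi}{2}-T_f^{-1/2}$, which is legitimate precisely because $T_f>1$ forces the argument $T_f^{-1/2}<1$ into the regime where the first-order Taylor expansion of $\arctan$ is accurate. This yields $\rho(T_f,4)\approx\tfrac{\pi}{2}\sqrt{T_f}-1$. Substituting this together with the interference term into \eqref{PF}, the additive constant $1$ of the denominator cancels against the $-1$ from $\rho$, leaving the compact coverage probability
\[ P_f(T_f,4)\approx\frac{2}{\pi\sqrt{T_f}\bigl(1+\frac{\lambda_{tu}}{\lambda_f}\sqrt{P_d/P_f}\bigr)}. \]

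Finally, I would feed this compact form into \eqref{R1F}. Since $\theta$ ranges over $[\ln T_f,\infty)$, every $e^\theta\ge T_f>1$, so the approximation is uniformly valid over the whole integration range; the boundary term $\ln(T_f)P_f(T_f,4)$ then immediately produces the second summand of \eqref{R1FS}. For the integral term, writing $P_f(e^\theta,4)\approx\frac{2}{\pi(1+\frac{\lambda_{tu}}{\lambda_f}\sqrt{P_d/P_f})}e^{-\theta/2}$ reduces it to the elementary integral $\int_{\ln T_f}^{\infty}e^{-\theta/2}\,\mathrm{d}\theta=2T_f^{-1/2}$, which gives the first summand $\frac{4}{\pi\sqrt{T_f}(1+\frac{\lambda_{tu}}{\lambda_f}\sqrt{P_d/P_f})}$. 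Adding the two summands and factoring out the common prefactor delivers $\frac{2(2+\ln T_f)}{\pi\sqrt{T_f}(1+\frac{\lambda_{tu}}{\lambda_f}\sqrt{P_d/P_f})}$, which is exactly the claimed expression.
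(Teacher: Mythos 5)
Your proposal is correct and follows essentially the same route as the paper's own proof: specialize $C(4)=\pi/2$ and $\rho(T_f,4)=\sqrt{T_f}\bigl(\tfrac{\pi}{2}-\arctan T_f^{-1/2}\bigr)$, invoke $\arctan x\approx x$ for $x<1$ (which is exactly where $T_f>1$ enters), cancel the resulting $-1$ against the additive $1$ in the denominator of \eqref{PF} to get the compact coverage probability, and then substitute into \eqref{R1F} to obtain the two summands. Your explicit remark that the approximation holds uniformly over the integration range (since $e^\theta\ge T_f>1$) is a small point of added care that the paper leaves implicit, but the argument is the same.
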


\begin{proof}
See Appendix D.
\end{proof}

\subsection{Replicator Dynamics}

Evolutionary game theory differs from classical game theory by focusing more on the dynamics of strategy change as influenced not solely by the quality of the various competing strategies, but by the effect of the frequency with which those various competing strategies are found in the population \cite{Evolutionary}\cite{EvolutionaryGame}.
In a dynamic evolutionary game, the replicator dynamic is a simple model of
evolution and prestige-biased learning in games \cite{Evolutionary1}.

We consider an evolutionary game of user access mode selection where the group of potential users in $(a)$ can choose among the available wireless access modes $i$.
The game is repeated, and the user adopts a strategy that gives a higher payoff which is determined by the payoff matrix and the proportion of each strategy in
the population. The speed
of the user in observing and adapting the access mode selection is
controlled by parameter $\rho>0$.

For a small period of time, the rate of strategy
is governed by the replicator dynamics, which is defined as
follows:

\begin{equation}\label{replicator}
\dot{x}_i^{(n)}=\rho x_i^{(n)}\left(\pi_i^{(n)}(\textbf{x})-\overline{\pi}^{(n)}(\textbf{x})\right),
\end{equation}
where $\overline{\pi}^{(n)}(\textbf{x})$ is the average
payoff of the users in potential group $(n)$ which is computed as

\begin{equation}\label{average payoff}
\overline{\pi}^{(n)}(\textbf{x})=\Sigma_ix_i^{(n)}\pi_i^{(n)}(\textbf{x}).
\end{equation}

Based on this replicator dynamics of the users
in potential group $(n)$, the number of users choosing access mode $i$ increases if
their payoff is above the average payoff. It is impossible for
a user to choose access mode $k$, which provides a lower payoff
than the current payoff. This replicator dynamics satisfies the
condition of $\Sigma_i\dot{x}_i^{(n)}=1$.
In the next section, we will use above expressions to analyze the stability of the access mode selection game.

\section{EVOLUTIONARY EQUILIBRIUM and STABILITY ANALYSIS}

In this section, the evolutionary
equilibrium is considered to be the solution of the formulated access mode selection game and we also analyzed its stability.
An evolutionary equilibrium is a
fixed point where there is no change in proportion of
players choosing different strategies. In other words, since the rate of strategy adaptation
is zero (i.e., $\dot{x}_i^{(n) }= 0$), there is no user who deviates to gain a
higher payoff \cite{Evolutionary2}.

To evaluate the stability at the fixed point $\dot{x}_i^{(n)*}$ , which is
obtained by solving $\dot{x}_i^{(n) }= 0$, the eigenvalues of the Jacobian
matrix corresponding to the replicator dynamics need to be
evaluated. The fixed point is stable if all eigenvalues have a
negative real part \cite{Evolutionary3}.

For the mode selection game in this paper, the replicator dynamics
can be expressed as follows:

\begin{equation}\label{replicatorSC}
\begin{gathered}
\dot{x}_f^{(D)}=\sigma x_f^{(D)}\left(\pi_f^{(D)}
(\textbf{x})-x_f^{(D)}\pi_f^{(D)}(\textbf{x})-x_d^{(D)}\pi_d^{(D)}(\textbf{x})\right)\hfill\\
\quad\quad\mathop = \limits^{\left( a \right)} \sigma x_f^{(D)}\left(\left(1-x_f^{(D)}\right)\pi_f^{(D)}(\textbf{x})-\left(1-x_f^{(D)}\right)\pi_d^{(D)}(\textbf{x})\right)\hfill\\
\quad\quad=\sigma x_f^{(D)}(1-x_f^{(D)})\left(\pi_f^{(D)}(\textbf{x})-\pi_d^{(D)}(\textbf{x})\right),\hfill\\
\end{gathered}
\end{equation}
where $(a)$ follows that $x_f^{(D)}+x_d^{(D)}=1$.

Hence, we could derive the evolution
algorithm as described in Algorithm 1. Also, the stability of the equilibrium is clarified in Theorem 2.

\begin{algorithm}[]\tiny
   \caption{Evolution Algorithm}
   \begin{algorithmic}[1]
   \small\STATE \textbf{Initialize}  Each user selects to get access to a access mode in random.\\
    \STATE \textbf{Step 1} The payoff of users accessing $i$ mode $\pi_i^{(n)}$ is calculated according to \eqref{payoff}. This payoff information is sent to the F-APs.\\
   \STATE \textbf{Step 2} The average payoff of the content require users $\overline{\pi}^{(n)}$ is calculated according to \eqref{average payoff} in F-APs, and is broadcast to all potential D2D mode users. \\
    \STATE \textbf{Step 3} For a user accessing mode $i$, if its payoff is less than the average payoff, i.e., $\pi_i^{(n)}<\overline{\pi}^{(n)}$, it would randomly
   switch to another access mode $j$, where $j\neq i$ and $\pi_j^{(n)}>{\pi}_i^{(n)}$\\
   \STATE Repeat from Step 1 to Step 3 until convergence.
   \end{algorithmic}
\end{algorithm}
\begin{theorem}
For a potential D2D mode user with a stable equilibrium point, the interior evolutionary equilibrium in game is asymptotically stable.
\end{theorem}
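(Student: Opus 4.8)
The plan is to exploit the one-dimensional structure of the reduced replicator system and apply the linearization (Lyapunov indirect) criterion already stated above, namely that a fixed point is asymptotically stable precisely when the Jacobian eigenvalue has negative real part. First I would reduce the dynamics to a single scalar equation. Because $x_f^{(D)}+x_d^{(D)}=1$, the state of the potential D2D population is fully described by $x:=x_f^{(D)}$, and the replicator dynamics in \eqref{replicatorSC} becomes $\dot{x}=\sigma\,x(1-x)\bigl(\pi_f^{(D)}(x)-\pi_d^{(D)}(x)\bigr)$. Setting $\dot{x}=0$ exposes the boundary rest points $x=0,\,1$ together with the interior equilibrium $x^{*}$ determined implicitly by the payoff-equalization condition $\pi_f^{(D)}(x^{*})=\pi_d^{(D)}(x^{*})$, whose existence is exactly the hypothesis of the theorem.

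Next I would linearize at $x^{*}$. In one dimension the Jacobian reduces to the scalar derivative $d\dot{x}/dx$; differentiating by the product rule generates one term carrying the factor $\pi_f^{(D)}-\pi_d^{(D)}$, which vanishes at the interior equilibrium. What remains is
\begin{equation}
\left.\frac{d\dot{x}}{dx}\right|_{x^{*}}=\sigma\,x^{*}(1-x^{*})\,\frac{d}{dx}\bigl(\pi_f^{(D)}-\pi_d^{(D)}\bigr)\Big|_{x^{*}}.
\end{equation}
Since $\sigma>0$ and $0<x^{*}<1$ at an interior point, the two prefactors are strictly positive, so the sign of the eigenvalue is dictated entirely by the monotonicity of the payoff gap at $x^{*}$.

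The crux of the argument---and the step I expect to require the most care---is showing that this payoff gap is strictly decreasing. Substituting the explicit payoffs from \eqref{payoff}, the F-AP-mode payoff carries the denominator $N^{(F)}+N^{(D)}x$ and the congestion cost $q_f$ from \eqref{qi}, both of which worsen as $x$ grows; differentiating shows that the rate contribution is negative (of order $-1/(N^{(F)}+N^{(D)}x)^{2}$) and the delay-cost contribution is also negative (exponential in $x$), so $d\pi_f^{(D)}/dx<0$. Symmetrically, as $x$ increases the D2D share $1-x$ shrinks, which both raises the per-user rate term and lowers $q_d$, giving $d\pi_d^{(D)}/dx>0$. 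Hence $d(\pi_f^{(D)}-\pi_d^{(D)})/dx<0$ on all of $(0,1)$, the eigenvalue is negative, and by the linearization criterion the interior equilibrium $x^{*}$ is asymptotically stable. Intuitively the exponential congestion-and-fronthaul penalties supply exactly the negative feedback---overcrowding a mode drives its relative payoff down---that locks the population onto $x^{*}$.
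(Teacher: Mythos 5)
Your proposal is correct and follows essentially the same route as the paper's Appendix E: you linearize the scalar replicator equation, observe that the product-rule term carrying $\pi_f^{(D)}-\pi_d^{(D)}$ vanishes at the interior equilibrium, and then establish $\frac{\mathrm{d}\pi_f^{(D)}}{\mathrm{d}x_f^{(D)}}<0$ and $\frac{\mathrm{d}\pi_d^{(D)}}{\mathrm{d}x_f^{(D)}}>0$ from the congestion denominators and exponential cost coefficients in \eqref{payoff} and \eqref{qi}, exactly as in \eqref{proofreplicatorCC2} and \eqref{proofreplicatorCC3}. The only cosmetic difference is that you state the monotonicity of the payoff gap on all of $(0,1)$ qualitatively, whereas the paper writes out the two derivatives explicitly; the conclusion and the stability criterion invoked are identical.
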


\begin{proof}
See Appendix E.
\end{proof}

\subsection{Comparison with Max Rate Based Access Mode Selection Algorithm}
For comparison purpose, we consider a max rate based access mode selection algorithm under the same system model. Max rate based access mode selection algorithm needs to sort both the D2D mode and F-AP mode rate list based on QuickSort algorithms to have ordered FAP and MBS rate lists, and then control each user to access to its node with max rate. We derive the max rate based access mode selectio
algorithm as described in Algorithm 2.

\begin{algorithm}[!htp]\tiny
   \caption{Max Rate Based User Access Mode Selection Algorithm}
   \begin{algorithmic}[1]
   \small\STATE \textbf{Initialize}  Each user selects to get access to a access mode in random.\\
    \STATE \textbf{for} $u= 1,2,3,...,N_u$ do. \\
    \STATE \textbf{Step 1} Find the max rate D2D transmit user with the required content $V_u$ of the desired user $U_u$ within a radius threshold $L_f$, $B(U_u,L_f)\cap\Phi_{tu}=\{X_1,X_2,...,X_D\}$.\\
    \STATE \quad\textbf{if} $B(U_u,L_f)\cap\Phi_{tu}= \phi$.\\
    \STATE \quad\quad $R_d^{max}=0$,\quad\textbf{go to Step 2}.\\
    \STATE \quad\textbf{end if}\\
    \STATE \quad Calculate the D2D $X_1$ rate $R_{X_1}$ from \eqref{RD}.\\
    \STATE  \quad\textbf{Set} $R_d^{max}=R_{X_1}$.
   \STATE \quad\quad\textbf{for} $i= 2,3,...,D$ do. \\
   \STATE \quad\quad\quad\textbf{if} $R_{X_i}>R_d^{max}$ .\\
   \STATE \quad\quad\quad Max Rate D2D transmit user $R_d^{max}=R_{X_i}$ .\\
   \STATE \quad\quad\quad\textbf{end if}\\
   \STATE \quad\quad\textbf{end for}\\
      \STATE \textbf{Step 2} Find the max rate F-AP of the desired user $U_u$ in $\Phi_{f}=\{Y_1,Y_2,...,Y_F\}$.\\
    \STATE \quad Calculate the F-AP $Y_1$ rate $R_{Y_1}$ from \eqref{R1F}.\\
    \STATE  \quad\textbf{Set} $R_f^{max}=R_{Y_1}$.
   \STATE \quad\quad\textbf{for} $j= 2,3,...,F$ do. \\
   \STATE \quad\quad\quad\textbf{if} $R_{Y_j}>R_f^{max}$ .\\
   \STATE \quad\quad\quad Max Rate F-AP node $R_f^{max}=R_{Y_j}$ .\\
   \STATE \quad\quad\quad\textbf{end if}\\
   \STATE \quad\quad\textbf{end for}\\
  \STATE \textbf{Step 3} \\
  \STATE \quad\quad \textbf{if} $R_d^{max}>R_f^{max}$ .\\
   \STATE \quad\quad \quad User $U_u$ select \textbf{D2D mode}.\\
   \STATE \quad\quad \textbf{else}\\
   \STATE \quad\quad \quad User $U_u$ select \textbf{F-AP mode}.
   \STATE \quad\quad\textbf{end if} \\
  \STATE \textbf{end for}\\
     \end{algorithmic}
\end{algorithm}

\section{NUMERICAL RESULTS}

In this section, the accuracy of the above ergodic rate expressions
and the performance of the proposed algorithm are
evaluated by using Matlab with Monte Carlo simulation method.
The simulation parameters are listed as follows in Table I, and the simulation network topology is shown in Fig. 3.

\begin{table}[!htp]
\renewcommand{\arraystretch}{1.3}
\caption{SIMULATION PARAMETERS} \label{table_example}
\centering
\begin{tabular}{{|c|c|c|}}
\hline
\bfseries Parameters & \bfseries Value &\bfseries Description\\
\hline
$N$& $1000$ & Number of video contents\\
\hline
$L_d$ & 30m  & D2D distance threshold\\
\hline
$C_d$& $80$ &Cache size of D2D transmit users\\
\hline
$C_f$ & $100 \sim 500$& Cache size of F-APs \\
\hline
$B_d$ & 300MHz & Bandwidth of D2D users\\
\hline
$B_f$ & 100MHz & Bandwidth of F-APs\\
\hline
 $P_d$ & 13dBm & Transmit power of D2D transmit user\\
\hline
 $P_f$ & 23dBm & Transmit power of F-AP\\
\hline
$\lambda_{ru}$ & $6 \times 10^{-5}$ &Intensity of content require users\\
\hline
$\lambda_{tu}$ & $5 \times 10^{-5}$ &Intensity of D2D transmit users\\
\hline
$\lambda_f$ & $3\times 10^{-5}$ &Intensity of F-APs\\
\hline
$\lambda_g$ & $2\times 10^{-6}$ &Intensity of gateways\\
\hline
$\alpha$  & 4  \cite{tc} & Path loss exponent\\
\hline

\end{tabular}\vspace*{-1em}
\end{table}

\begin{figure}[!htp]
\centering
\includegraphics[width=3.4in]{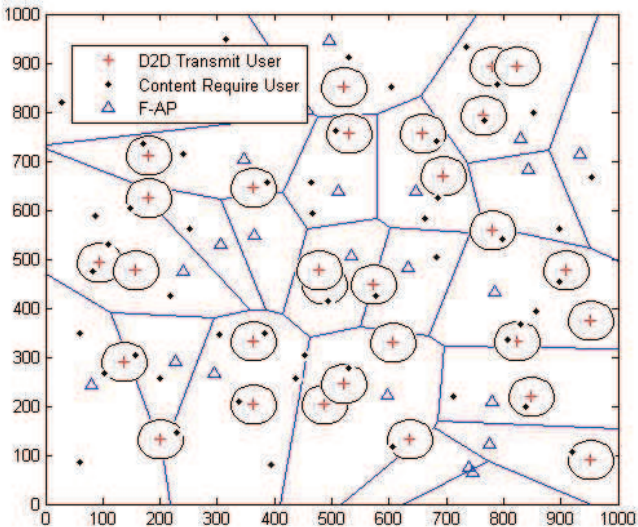}
\caption{Poisson distributed content require users, D2D transmit users and F-APs. The F-APs coverage boundaries are shown and form a Voronoi tessellation.}
\label{f2}\vspace*{-1em}
\end{figure}

\begin{figure}[!htp]
\centering
\includegraphics[width=3.4in]{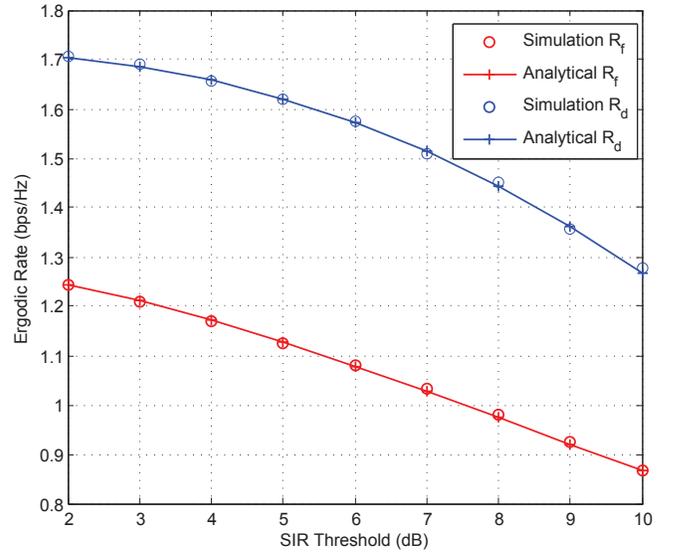}
\caption{Ergodic rate of D2D mode and F-AP mode versus different SIR thresholds.}
\label{f3}\vspace*{-1em}
\end{figure}

Fig. 4 shows the ergodic rate achieved by the D2D mode and F-AP mode in F-RAN with the varying SIR QoS thresholds $T$.
The analytical results closely match with the
corresponding simulation results, which validates our analysis
in Section III. It can be observed that the ergodic rate of both D2D and F-AP user access mode decreases as the SIR threshold increases. This is because the larger $T$ suggests that the user is more strict in the quality of SIR, which leads to ergodic rate decreasing.

\begin{figure}[!htp]
\centering
\includegraphics[width=3.4in]{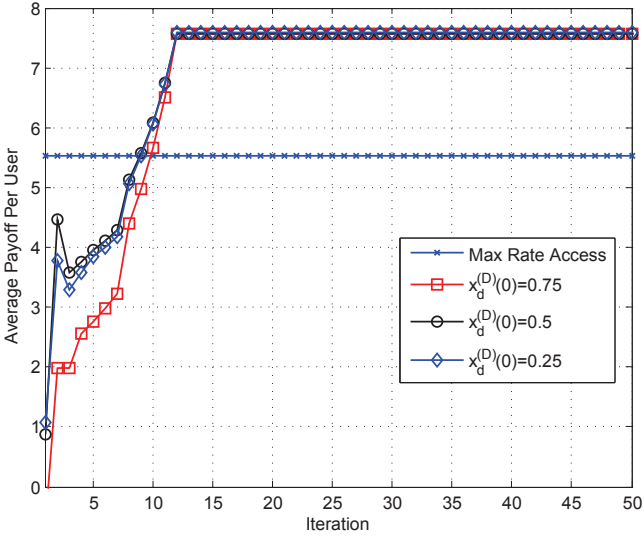}
\caption{Average payoff per user with the user access mode selection algorithm.}
\label{f3}\vspace*{-1em}
\end{figure}

\begin{figure}[!htp]
\centering
\includegraphics[width=3.4in]{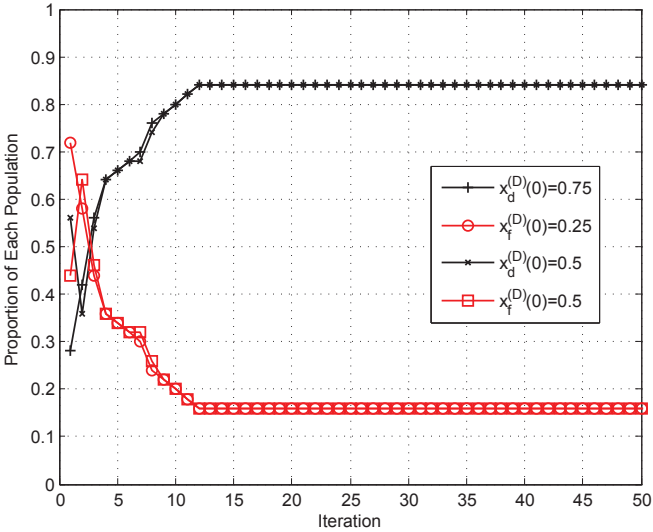}
\caption{Proportion of user with the user access mode selection algorithm.}
\label{f3}\vspace*{-1em}
\end{figure}

The proportion and the achieved
average payoff per user accessing different mode during the
evolution process are shown in Fig. 5 and Fig. 6. It can be seen that the proposed user access mode selection algorithm can reach the evolutionary equilibrium very fast, within less than 15 iterations.
More specifically, as can be seen from Fig. 5, we
could find that after several iterations all curves with different proportion at the initial point at last achieve a much better average payoff than only consider max rate based user access mode selection algorithm.

On the other hand, the trajectories of the proportion of users selecting each
access mode are illustrated in Fig. 6. It can be observed that the
proportions eventually converge to the equilibrium after changing in different directions. At the same time, those
curves about payoff finally converge to the network average payoff.

\section{CONCLUSION}
In this paper, an evolutionary game theory based algorithm
to solve the problem of user access mode selection
in downlink F-RAN is proposed. Stochastic geometry is used to derive the payoff expressions by taking into account the different nodes locations, interferes as well as the delay cost. The evolutionary equilibrium has been considered to
be the stable solution for which all users receive identical payoff from accessing different access modes. The simulation results show that the evolutionary game based access mode selection algorithm can reach a much higher payoff than the max rate based algorithm.

\appendix
\subsection{Proof of $P_d(T_d, \alpha, \left\| {{X_d}} \right\|)$}
According to the definition of coverage probability, we have
\begin{equation}\label{PFproof}
\begin{gathered}
 P_d(T_d, \alpha, \left\| {{X_d}} \right\|) = \Pr \left( {\frac{{{P_d}{h_d}{{\left\| {{X_d}} \right\|}^{ - {\alpha}}}}}{{{I_{d,ru}} + {I_{f,ru}}}} \ge {T_d}} \right) \hfill\\
  = \Pr \left( {{h_d} \ge \frac{{{T_d}{{\left\| {{X_d}} \right\|}^{  {\alpha }}}}}{{{P_d}}}\left( {{I_{d,ru}} + {I_{f,ru}}} \right)} \right) \hfill\\
 \mathop  = \limits^{\left( a \right)} \mathbb{E}\left[ {\exp \left( { - \frac{{{T_d}{{\left\| {{X_d}} \right\|}^{  {\alpha}}}}}{{{P_d}}}\left( {{I_{d,ru}} + {I_{f,ru}}} \right)} \right)} \right] \hfill\\
 \mathop  = \limits^{\left( b \right)} {{\cal L}_{{I_{d,ru}}}}\left( {\frac{{{T_d}{{\left\| {{X_d}} \right\|}^{  {\alpha}}}}}{{{P_d}}}} \right){{\cal L}_{{I_{d,ru}}}}\left( {\frac{{{T_d}{{\left\| {{X_d}} \right\|}^{  {\alpha}}}}}{{{P_d}}}} \right) \hfill\\
=\exp \left( { - \pi {{\left\| {{X_d}} \right\|}^2}\left( {{p_c^{D}\lambda _{tu}} + {{\left( {\frac{P_f}{P_d}} \right)}^{\frac{2}{{{\alpha}}}}}{\lambda _f}} \right)C\left( {{\alpha}} \right)T_d^{\frac{2}{{{\alpha}}}}} \right), \hfill\\
 \end{gathered}
 \end{equation}
where (a) follows from the Laplace transform of $h_d \sim \exp(1)$ and the independence of $I_{d,ru}$ and $I_{f,ru}$ \cite{La}\cite{La2}. (b) follows from letting $s = {{{{T_d}{{\left\| {{X_d}} \right\|}^{  {\alpha}}}}}/{{{P_d}}}}$ in the Laplace transforms of $I_{d,ru}$ and $I_{f,ru}$, which are given by

\begin{equation}\label{lfu}
{{\cal L}_{{I_{d,ru}}}}\left( s \right) = {{{\rm{E}}_{{I_{d,ru}}}}\left[ {\prod\limits_{{i} \in {\Phi _{tu}}} {\exp \left( { - s{P_d}{g_{{i}}}r_i^{-\alpha}} \right)} } \right]}.
\end{equation}

\begin{equation}\label{lmu}
 {{\cal L}_{{I_{f,ru}}}}\left( s \right) = {{{\rm{E}}_{{I_{f,ru}}}}\left[ {\prod\limits_{{j} \in {\Phi _f}} {\exp \left( { - s{P_f}{g_{{j}}}l_j^{-\alpha}} \right)} } \right]}.
\end{equation}

Using the independence of the fading random variables in ${\cal L}_{{I_{{d,ru}}}}(s)$ ,we have

\begin{equation}\label{Pr}
\begin{gathered}
{{\cal L}_{{I_{d,ru}}}}\left( s \right) = {{{\rm{E}}_{{I_{d,ru}}}}\left[ {\prod\limits_{{i} \in {\Phi _{tu}}} {\exp \left( { - s{P_d}{g_{{i}}}r_i^{-\alpha}} \right)} } \right]}\hfill\\
 \mathop  = \limits^{\left( a \right)} \left[ {\prod\limits_{{i} \in {\Phi _{tu}}} {\frac{1}{{1 + s{P_d}r_i^{-\alpha}}}} } \right]  \hfill\\
 \mathop  = \limits^{\left( b \right)}  {\exp \left( { - {p_c^{D}\lambda _{tu}}\int_{{{\rm{r}}^2}} {\left( {1 - \frac{1}{{1 + s{P_d}r_i^{-\alpha}}}} \right){\rm{d}}{r_i}} } \right)}  \hfill\\
 \mathop  = \limits^{\left( c \right)} \exp \left( { - 2\pi {p_c^{D}\lambda _{tu}}{{\left( {s{P_d}} \right)}^{\frac{2}{\alpha }}}}
 \int_0^\infty  {r\int_0^\infty  {{e^{\left( { - t\left( {1 + {r^\alpha }} \right)} \right)}}{\rm{d}}t{\rm{d}}r} }  \right), \hfill\\
 \mathop  = \limits^{\left( d \right)}\exp \left( { - {s^{{2 \mathord{\left/
 {\vphantom {2 \alpha }} \right.
 \kern-\nulldelimiterspace} \alpha }}}  C\left( \alpha  \right) {{p_c^{D}\lambda _{tu}}{P_d}^{{2 \mathord{\left/
 {\vphantom {2 \alpha }} \right.
 \kern-\nulldelimiterspace} \alpha }} } } \right),\hfill\\
 \end{gathered}
\end{equation}
where $(a)$ follows $g_i \sim \exp(1)$, $(b)$ follows from probability generating functional
(PGFL) of PPP \cite{op2} and, $(c)$ results from algebraic manipulation
after converting from Cartesian to polar coordinates and $(d)$ follows from
using some properties of Gamma function, and ${\cal L}_{{I_{{f,ru}}}}(s)$ can be obtained in a similar way.

\subsection{Proof of $R_d$}
For a positive continuous random variable $A$, we can use the following formula for computing its expectation

\begin{equation}\label{E1}
\begin{gathered}
\mathbb{E}\left[ {A\left| {A \ge W} \right.} \right] \hfill\\
  = \int_W^\infty  {t{f_A}\left( t \right)} {\rm{d}}t = \int_W^\infty  {\int_0^t {{f_A}\left( t \right)} {\rm{d}}a} {\rm{d}}t \hfill\\
  = \int_0^W {\int_W^\infty  {{f_A}\left( t \right)} } {\rm{d}}t{\rm{d}}a + \int_W^\infty  {\int_a^\infty  {{f_A}\left( t \right)} } {\rm{d}}t{\rm{d}}a \hfill\\
  = W\Pr \left( {A \ge W} \right) + \underbrace{\int_W^\infty  {\Pr \left( {A \ge a} \right)} {\rm{d}}a}_{S}.\hfill\\
\end{gathered}
 \end{equation}

Next, we focus on the second term of \eqref{E1}, after changing variables with $W=\ln(T_d)$, $A=\ln(1+\gamma_d)$ and $a=\theta_d$ the expression of this term can be given as

\begin{equation}\label{RDproof}
\begin{gathered}
S=\int\limits_{ \ln(T_d)}^\infty  {\Pr \left( {\frac{{{P_d}{h_d}{{\left\| {{X_f}} \right\|}^{ - {\alpha}}}}}{{{I_{d,ru}} + {I_{f,ru}}}} > {e^{{\theta _d}}} - 1} \right)} {\rm{d}}{\theta _d} \hfill\\
\approx \int\limits_{ \ln(T_d)}^\infty  {{L_{{I_{d,ru}}}}\left( {\frac{{{e^{{\theta _d}}}{{\left\| {{X_f}} \right\|}^{{\alpha}}}}}{{{P_d}}}} \right){L_{{I_{f,ru}}}}\left( {\frac{{{e^{{\theta _d}}}{{\left\| {{X_d}} \right\|}^{{\alpha}}}}}{{{P_d}}}} \right)} {\rm{d}}{\theta _d}\hfill\\
= \int\limits_{ \ln(T_d)}^\infty  {\exp \left( { - \pi {{\left\| {{X_d}} \right\|}^2}\beta C\left( {{\alpha }} \right)e^  {\frac{2{\theta _d}}{{{\alpha }}}}} \right)} {\rm{d}}{\theta _d} \hfill\\
= -\frac{{{\alpha}}}{2}{\rm{Ei}}\left( { -T_d^{\frac{2}{\alpha}} \pi {{\left\| {{X_d}} \right\|}^2}\beta C\left( {{\alpha}} \right)} \right), \hfill\\
\end{gathered}
 \end{equation}
where $\beta=\left( {{p_c^{D}\lambda _{tu}} +   {{\left( {{P_f}}/{{P_d}} \right)}^{{2}/{{{\alpha}}}}}{\lambda _f}} \right) $, and the proof is finished.

\subsection{Proof of ${P_f}\left( {{T_f},\alpha} \right)$}
According to the definition of coverage probability, we have
 \begin{equation}\label{PMproof}
 \begin{gathered}
 {P_f}\left( {{T_f},\alpha} \right) = \Pr \left( {\frac{{{P_f}{h_f}{{\left\| {{X_f}} \right\|}^{ - {\alpha}}}}}{{{I_{d,ru}} + {I_{f,ru}}}} \ge {T_f}} \right)\hfill\\
  \quad= \int_0^\infty  {{\rm{Pr}}\left( {{h_f} \ge \frac{{{T_f}r_f^{{\alpha}}}}{{{P_f}}}\left( {{I_{f,ru}} + {I_{d,ru}}} \right)} \right)} {f_{\left\| {{X_f}} \right\|}}\left( {{r_f}} \right){\rm{d}}{r_d} \hfill\\
 \quad\mathop  = \limits^{\left( a \right)} \int_0^\infty  {{{\cal L}_{{I_{f,ru}}}}\left( {\frac{{{T_f}r_f^{{\alpha}}}}{{{P_f}}}} \right){{\cal L}_{{I_{d,ru}}}}\left( {\frac{{{T_f}r_f^{{\alpha}}}}{{{P_f}}}} \right)} {f_{\left\| {{X_f}} \right\|}}\left( {{r_f}} \right){\rm{d}}{r_f} \hfill\\
 \quad \mathop  = \limits^{\left( b \right)} \int_0^\infty  {\exp \left( { - \pi {\lambda _f}r_f^2\rho \left( {{T_f},{\alpha}} \right)} \right)} \hfill\\
  \cdot \exp \left( { - \pi {\lambda _{tu}}r_f^2C\left( {{\alpha}} \right){{\left( {\frac{{{P_d}{T_f}}}{{{P_f}}}} \right)}^{\frac{2}{\alpha}}}} \right)2\pi {\lambda _f}{r_f}{e^{ - \pi {\lambda _f}r_f^2}}{\rm{d}}{r_f}\quad \quad  \hfill\\
 \quad = \frac{1}{{1 + \rho \left( {{T_f},{\alpha}} \right) + \frac{{{\lambda _{tu}}}}{{{\lambda _f}}}C\left( {{\alpha}} \right){{\left( {\frac{{{P_d}{T_f}}}{{{P_f}}}} \right)}^{2/{\alpha}}}}},\hfill\\
 \end{gathered}
 \end{equation}
where (a) follows the setting of $h_f \sim \rm{exp}(1)$ and the independence between inter-tier interference $I_{d,ru}$ and intra-tier interference $I_{f,ru}$, equation (b) follows the definition of the Laplace transform for ${I_{d,ru}}$, ${I_{f,ru}}$, where ${\cal L}_{I_{d,ru}}$ and ${\cal L}_{I_{f,ru}}$ can be derived by using a similar way as \eqref{Pr}, and the proof is finished.

\subsection{Proof of lemma 1}
we first drive the coverage probability of F-AP mode with $\alpha=4$ and $T_f > 1$, which can be denoted as
\begin{equation}\label{RFproof}
\begin{gathered}
 P_f^{\alpha  = 4}\left( {{T_f}} \right) = \frac{1}{{1 + \rho \left( {{T_f},4} \right) + \frac{{{\lambda _{tu}}}}{{{\lambda _f}}}C\left( 4 \right)\sqrt {\frac{{{P_d}{T_f}}}{{{P_d}}}} }} \hfill\\
  = \frac{1}{{1 + \sqrt {{T_f}} \int_{1/\sqrt {{T_f}} }^\infty  {\frac{1}{{1 + {v^2}}}{\rm{d}}v}  + \frac{{\pi {\lambda _{tu}}}}{{2{\lambda _f}}}\sqrt {\frac{{{P_d}{T_f}}}{{{P_f}}}} }} \hfill\\
  \mathop  \approx \limits^{\left( a \right)} \frac{1}{{1 + \sqrt {{T_f}} \left[ {\frac{\pi }{2} - \left( {1/\sqrt {{T_f}} } \right)} \right] + \frac{{\pi {\lambda _{tu}}}}{{2{\lambda _f}}}\sqrt {\frac{{{P_d}{T_f}}}{{{P_f}}}} }} \hfill\\
  = \frac{2}{{{{\pi \sqrt {{T_f}} }}\left( {1 + \frac{{{\lambda _{tu}}}}{{{\lambda _f}}}\sqrt {\frac{{{P_d}}}{{{P_f}}}} } \right)}}, \hfill\\
 \end{gathered}
 \end{equation}
where (a) follows the property of the inverse trigonometric functions that $\arctan(A) \approx A$ if A is smaller than 1, i.e., $T_f \ge 1$.

Then, substituting \eqref{RFproof} into \eqref{E1} with $W=\ln(T_f)$ and we obtain the result.

\subsection{Proof of Theorem 2}
For the system to have a stable equilibrium point, all eigenvalues of the Jacobian of the system of equations should have a negative real part \cite{Evolutionary3}. As we have one system
equation, an equivalence to this condition is that the Jacobian should be negative definite. To this end, we first denote by $f$ the right hand side of \eqref{replicatorSC}. Accordingly, we have

\begin{equation}\label{proofreplicatorCC1}
\begin{gathered}
\frac{\rm{d} f}{\rm{d}x_d^{(D)}}=\sigma x_f^{(D)}\left(1-x_f^{(D)}\right) \left(\frac{\rm{d}\pi_f^{(D)}(\textbf{x})}{\rm{d}x_f^{(D)}}-
\frac{\rm{d}\pi_d^{(D)}(\textbf{x})}{\rm{d}x_f^{(D)}}\right)\;(a)\hfill\\
\quad\quad\quad+\sigma \left(1-2x_d^{(D)}\right)\left( \pi_f^{(D)}(\textbf{x})-\pi_d^{(D)}(\textbf{x})   \right)\quad\quad\;\quad(b)\hfill\\
\end{gathered}
\end{equation}

According to the definition, $\pi_f^{(D)}(\textbf{x})=\pi_d^{(D)}(\textbf{x})$ at the equilibrium point. Hence, $(b)=0$. Then we calculate$ \frac{\rm{d}\pi_f^{(D)}(\textbf{x})}{\rm{d}x_f^{(D)}}$ and $\frac{\rm{d}\pi_d^{(D)}(\textbf{x})}{\rm{d}x_f^{(D)}}$ $(a)$ as \eqref{proofreplicatorCC2} and \eqref{proofreplicatorCC2} on the top of next page

\begin{figure*}[ht]
\begin{equation}\label{proofreplicatorCC2}
\begin{gathered}
\frac{\rm{d}\pi_f^{(D)}(\textbf{x})}{\rm{d}x_f^{(D)}}
=\frac{\rm{d}{\left(\frac{p_fB_f}{N^{(F)}+N^{(D)}x_f^{(D)}}C_f^{(D)}-q_f\phi_f\right)}}{\rm{d}{x_f^{(D)}}}
=-\frac{N^{(D)}p_fB_fC_f^{(D)}}{\left(N^{(F)}+N^{(D)}x_f^{(D)}\right)^2}-c_3c_4N^{(D)}\phi_f\exp\left(c_4\left(N^{(F)}+N^{(D)}x_f^{(D)}\right)\right)
\end{gathered}
\end{equation}
   \hrulefill

\begin{equation}\label{proofreplicatorCC3}
\begin{gathered}
\frac{\rm{d}\pi_d^{(D)}(\textbf{x})}{\rm{d}x_f^{(D)}}
=\frac{\rm{d}{\left(\frac{p_dB_d}{N^{(D)}(1-x_f^{(D)})}C_d^{(D)}-q_d\phi_d\right)}}{\rm{d}{x_f^{(D)}}}
=\frac{N^{(F)}p_dB_dC_d^{(D)}}{\left(N^{(D)}(1-x_f^{(D)})\right)^2}+c_1c_2N^{(D)}\phi_d\exp\left(c_2\left(N^{(D)}(1-x_f^{(D)})\right)\right)
\end{gathered}
\end{equation}
   \hrulefill
\end{figure*}

It is obvious that for any $x_f^{(D)}>0$, we have $\frac{\rm{d}\pi_f^{(D)}(\textbf{x})}{\rm{d}x_f^{(D)}}<0$,and
$\frac{\rm{d}\pi_d^{(D)}(\textbf{x})}{\rm{d}x_f^{(D)}}>0$. Hence this proves $\left(\frac{\rm{d}\pi_f^{(D)}(\textbf{x})}{\rm{d}x_f^{(D)}}-
\frac{\rm{d}\pi_d^{(D)}(\textbf{x})}{\rm{d}x_f^{(D)}}\right)<0$, and $\frac{\rm{d} f}{\rm{d}x_f^{(D)}}$ is strictly negative for
any non-zero value of $x_f^{(D)}$. Therefore, $\dot{x}_f^{(D)}$ evaluated
at any interior equilibrium point is negative. This completes
the proof.

\end{document}